\begin{document}

\begin{verbatim}\end{verbatim}\vspace{2.5cm}

\begin{frontmatter}

\title{Proper Hamiltonian Cycles in Edge-Colored Multigraphs}

\author[ESP1]{Raquel \'Agueda}
\author[LRI]{Valentin Borozan}
\author[ESP2]{Raquel D\'iaz}
\author[LRI]{Yannis Manoussakis} 
\author[LRI]{Leandro Montero}

\address[ESP1]{Departamento de An\'alisis Econ\'omico y Finanzas, Universidad de Castilla, \\ 
La Mancha 45071 Toledo, Spain. \\{\rm \texttt{raquel.agueda@uclm.es}}}

\address[LRI]{Laboratoire de Recherche en Informatique, Universit\'e Paris-Sud, \\ 
91405 Orsay CEDEX, France. \\{\rm \texttt{\{valik,yannis,lmontero\}@lri.fr}}}

\address[ESP2]{Facultad de Ciencias Matem\'aticas, Universidad Complutense de Madrid, \\ 
28011 Madrid, Spain. \\{\rm \texttt{radiaz@ucm.es}}}

\begin{abstract} 
A $c$-edge-colored multigraph has each edge colored with one of the $c$ available colors and no two parallel edges
have the same color. A proper Hamiltonian cycle is a cycle containing all the vertices 
of the multigraph such that no two adjacent edges have the same color. In this work we establish sufficient conditions for a multigraph to have
a proper Hamiltonian cycle, depending on several parameters such as the number of edges and the rainbow degree.
\end{abstract}

\begin{keyword}
Multigraph, Proper Hamiltonian Cycle, Edge-Colored Graph
\end{keyword}

\end{frontmatter}

\section{Introduction}\label{intro}
The study of problems modeled by edge-colored graphs gave place to important
developments over the last years. For instance, the research on long colored cycles
and paths for edge-colored graphs has given interesting results. We refer to~\cite{Bang-Jensen2001} for a survey on such results. 
From the point of view of applicability, problems arising in molecular biology are often modeled using colored
graphs, i.e., graphs with colored edges and/or vertices~\cite{Pevzner2000}. Given such an edge-colored
graph, original problems translate to extracting subgraphs colored in a specified pattern.
The most natural pattern in such a context is that of a proper coloring, i.e., adjacent edges
have different colors. In this work we give sufficient conditions involving various parameters such as the number of edges and the rainbow degree,
in order to guarantee the existence of properly colored Hamiltonian cycles in edge-colored multigraphs. 
We note that the proper Hamiltonian path and proper Hamiltonian cycle problems are both $NP$-complete in the general case.
However, a proper Hamiltonian path can be determined in polynomial time in $c$-edge-colored complete graphs for $c \geq 2$~\cite{Feng}.
It is also polynomial to find a proper Hamiltonian cycle in $c$-edge-colored complete graphs for $c = 2$, see~\cite{bankfalvi},
but it is still open to determine the computational complexity for $c\geq 3$~\cite{Benkouar1996}.
Many other results for edge-colored multigraphs can be found in the survey by Bang-Jensen and Gutin~\cite{BG97}.
Results involving only degree conditions can be found in~\cite{Abouelaoualim2010}.

Formally, let $I_c=\{1,2,\ldots, c\}$ be a set of $c \geq 2$ colors. Throughout this paper $G^c$ denotes a \emph{c-edge-colored 
multigraph} such that each edge is colored with one color in $I_c$ and no two edges joining the same pair of vertices 
have the same color. Let $V(G^c)$ and $E(G^c)$ be the vertex set and the edge set of $G^c$ respectively. 
Set $n=|V(G^c)|$ and $m=|E(G^c)|$. If $H$ is either a set of vertices or a subgraph of $G^c$, and $x$ is a vertex of $G^c$, then $N^i_H(x)$ denotes
the set of vertices of $H$ adjacent to $x$ with an edge of color $i$ and $d^i_H(x)$ denotes its cardinality. 
Whenever $H$ contains all the vertices of $G^c$, 
we write $N^i(x)$ instead of $N^i_{G^c}(x)$. The \emph{colored i-degree} of a vertex $x$,
denoted by $d^i(x)$, is the cardinality of $N^i(x)$. 
The \emph{rainbow degree} of a vertex $x$, denoted by $rd(x)$, 
is the number of different colors on the edges incident to $x$. The \emph{rainbow degree} of a multigraph $G^c$, denoted by $rd(G^c)$, is 
the minimum rainbow degree among its vertices. An edge with endpoints $x$ and $y$ is denoted by 
$xy$, and by $c(xy)$ we denote the set of colors present on the edges between $x$ and $y$. The \emph{rainbow complete multigraph} is the one having all possible colored edges between any pair of vertices
(its number of edges is therefore $c\binom{n}{2}$). The \emph{complement} of a multigraph $G^c$ denoted by $\overline{G^c}$, is 
a multigraph with the same vertices as $G^c$ and an edge $vw \in E(\overline{G^c})$ on colour $i$ if and only if $vw \notin E(G^c)$ on that colour. 
We say that an edge $xy$ is a \emph{missing edge} of $G^c$ if $xy \in E(\overline{G^c})$.
A subgraph of $G^c$ is said to be \emph{properly colored} or just \emph{proper}, if any two adjacent edges in this subgraph differ 
in color. A \emph{path} is a sequence of different vertices $v_1v_2\ldots v_k$ such that $v_i$ is adjacent to $v_{i+1}$ for $i=1,\ldots,k-1$.
For $k\geq 3$, if $v_k$ is adjacent to $v_1$ then  $v_1v_2\ldots v_kv_1$ is called a \emph{cycle}.
A \emph{Hamiltonian path} (\emph{cycle}) is a path (cycle) containing all vertices of the multigraph. 
A \emph{proper Hamiltonian path} (\emph{cycle}) is a Hamiltonian path (cycle) that is properly colored.
All multigraphs are assumed to be connected. Throughout this paper we will abbreviate colors $red$, $blue$ and $green$ with $r$, $b$ and $g$.
For example, we will write $d^r(x)$ instead of $d^{red}(x)$.

The following results for edge-colored multigraphs will be useful in our work.

\begin{theorem}[\cite{Abouelaoualim2010}]\label{Abou}
Let $G^c$ be a $2$-edge-colored multigraph on $n$ vertices colored with $\{r,b\}$. If for every vertex $x$ we have that $d^r(x)\geq \left\lceil \frac {n+1}{2} \right\rceil $ and
$d^b(x)\geq \left\lceil \frac {n+1}{2}\right\rceil $, then $G^c$ has a proper Hamiltonian cycle for $n$ even and a proper cycle of length $n-1$ for $n$ odd. 
\end{theorem}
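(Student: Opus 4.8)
The plan is to use the fact that with only two colors a properly colored cycle is exactly an \emph{alternating} cycle, whose edges read $r,b,r,b,\dots$ around the cycle; in particular every proper cycle must have even length. This already explains the dichotomy in the statement: when $n$ is odd no proper cycle can use all $n$ vertices, so the best one can hope for is length $n-1$. Accordingly I would fix the target length $L$ (with $L=n$ for $n$ even and $L=n-1$ for $n$ odd, both even) and prove, by contradiction, that a \emph{longest} proper cycle $C$ has length at least $L$.

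First I would guarantee that proper cycles exist and, more importantly, set up the engine for enlarging them. Take a longest proper path $v_1v_2\cdots v_k$. By maximality, every vertex reached from $v_1$ by the color other than $c(v_1v_2)$ must already lie on the path, and likewise at $v_k$; since each such color-neighborhood has size at least $\lceil(n+1)/2\rceil$, the path is forced to be long, and a color-compatible splicing of its endpoints produces a proper cycle. The same splicing idea will be reused to grow a cycle, so I would isolate it as a lemma: given the positions along a path or cycle at which a fixed vertex attaches in red and at which it attaches in blue, together with the color pattern already present, find a position where a reroute keeps the pattern alternating.

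Now suppose $C=u_1u_2\cdots u_{2\ell}u_1$ is a longest proper cycle and that at least two vertices lie outside it. Since inserting a single vertex into an even alternating cycle would make its length odd, the natural move is to reroute $C$ so as to absorb \emph{two} off-cycle vertices (or a short off-cycle alternating segment) at once, preserving both alternation and connectedness. For an off-cycle vertex $w$ I would record the positions on $C$ where $w$ attaches in red and where it attaches in blue; the hypotheses give at least $\lceil(n+1)/2\rceil$ of each distributed around a cycle of length at most $n$, and an inclusion--exclusion count should then yield a pair of splice positions whose color demands are mutually compatible. Replacing the corresponding cycle edges by a detour through $w$ and a second off-cycle vertex then gives a longer proper cycle, contradicting maximality.

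The hard part will be the bookkeeping at the splice points: every reroute must simultaneously satisfy a color constraint on \emph{both} newly created edges so that alternation survives, and the naive crossover that reverses a segment of the cycle flips the $r/b$ pattern and therefore fails. This is exactly where the threshold $\lceil(n+1)/2\rceil$, strictly above $n/2$, is needed; the surplus over $n/2$ is precisely what forces, through the counting above, two positions whose parities and color requirements align. Finally, for $n$ odd the same exchange shows that pairs of missing vertices can always be absorbed until at most one vertex remains outside, giving a proper cycle of length $n-1$, while for $n$ even it drives the number of outside vertices to zero, giving a proper Hamiltonian cycle.
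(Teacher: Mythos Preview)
The paper does not contain a proof of this theorem: it is quoted from \cite{Abouelaoualim2010} and used as a black box (in the proofs of Theorems~\ref{s1} and~\ref{2colrd2}). So there is no ``paper's own proof'' to compare your proposal against.

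As for the sketch itself, the overall skeleton is the right one, and your parity observation (a proper $2$-colored cycle must be alternating, hence of even length, so absorption must happen two vertices at a time) is correct and essential. The genuine gap is in the absorption step. Your plan is, for an off-cycle vertex $w$, to count its red and blue attachments along $C$ and use the surplus of $\lceil (n+1)/2\rceil$ over $n/2$ to find compatible splice points through $w$ and ``a second off-cycle vertex''. But nothing in the hypotheses forces an edge of the required color between $w$ and another off-cycle vertex; all of $w$'s neighbors in a given color may lie on $C$. If instead you try to insert a $2$-path $x\,w\,y$ with $x,y\in C$, the counting you allude to is essentially the content of Lemma~\ref{s0} of the present paper, and that lemma's inequality $d^b_C(x)+d^b_C(y)>|C|$ is only forced by the degree hypothesis once $|C|\geq n-2$: from $d^b(x)+d^b(y)\geq n+1$ one gets $d^b_C(x)+d^b_C(y)\geq 2|C|-n+3$, which exceeds $|C|$ only when $|C|>n-3$. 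So the pigeonhole you invoke enlarges a near-Hamiltonian proper cycle, but does not by itself grow a short one; a separate argument is needed to reach length $n-2$ before the endgame you describe can take over. The proof in \cite{Abouelaoualim2010} supplies exactly this missing phase.
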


\begin{theorem}[\cite{paperPHP}]\label{3colorsPHP} Let $G^c$ be a $c$-edge-colored
multigraph on $n$ vertices, $n \geq 2$ and  $c\geq 3$. If $m\geq
c\binom{n-1}{2}+1$, then $G^c$ has a proper Hamiltonian path.
\end{theorem}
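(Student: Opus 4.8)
The plan is to argue by induction on $n$, after rephrasing the hypothesis in terms of the complement. Since the rainbow complete multigraph has $c\binom{n}{2}$ edges and $c\bigl[\binom{n}{2}-\binom{n-1}{2}\bigr]=c(n-1)$, the assumption $m\ge c\binom{n-1}{2}+1$ is equivalent to saying that $\overline{G^c}$ has at most $c(n-1)-1$ missing edges. The base case $n=2$ is immediate: $m\ge c\binom{1}{2}+1=1$ forces a single edge, which is by itself a proper Hamiltonian path. For the inductive step I would first locate a vertex whose deletion preserves the hypothesis on $n-1$ vertices, and then re-insert that vertex into a proper Hamiltonian path supplied by induction.

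For the deletion, let $\mu$ denote the number of missing edges of $G^c$, and for a vertex $v$ let $\mu_v$ be the number of missing edges incident to $v$, so that the degree of $v$ equals $c(n-1)-\mu_v$ and $G^c-v$ has exactly $\mu-\mu_v$ missing edges. To apply the induction hypothesis on $n-1$ vertices I need $\mu-\mu_v\le c(n-2)-1$, i.e.\ $\mu_v\ge \mu-c(n-2)+1$. If no vertex satisfied this, then every vertex would have $\mu_v\le \mu-c(n-2)$, and summing $\sum_v\mu_v=2\mu$ together with $n\ge 3$ would give $(n-2)\mu\ge cn(n-2)$, hence $\mu\ge cn$, contradicting $\mu\le c(n-1)-1<cn$. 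Thus a removable vertex $v$ always exists, and by induction $G^c-v$ has a proper Hamiltonian path $P=v_1v_2\cdots v_{n-1}$.

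The heart of the argument is to insert $v$ into $P$ while keeping the colouring proper. There are $n$ candidate slots: the two ends of $P$ and the $n-2$ internal positions between consecutive vertices $v_i,v_{i+1}$. Inserting $v$ at an internal slot replaces the edge $v_iv_{i+1}$ by two edges $vv_i$ and $vv_{i+1}$, which is admissible as soon as $v$ has an edge to $v_i$ of some colour $\alpha\neq c(v_{i-1}v_i)$ and an edge to $v_{i+1}$ of some colour $\beta$ with $\beta\neq c(v_{i+1}v_{i+2})$ and $\beta\neq\alpha$; inserting at an end only requires one such edge. Here the hypothesis $c\ge 3$ is essential, since with three or more colours available on each pair the constraint $\alpha\neq\beta$ together with the single forbidden path-colour on each side still leaves a free choice. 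The strategy is to show that if all $n$ slots were simultaneously blocked, then each blocked slot would be \emph{chargeable} to a batch of missing edges incident to $v$, forcing $\mu_v$ to exceed the amount permitted by the global bound $\mu\le c(n-1)-1$.

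The main obstacle is exactly this insertion step, and in particular the regime in which the removed $v$ has very small degree: then $v$ offers only a handful of edges and must be attached at an endpoint of the path with a prescribed incident colour. The saving observation is that a small degree of $v$ means a large $\mu_v$, which makes $G^c-v$ correspondingly denser; I would exploit this either by strengthening the induction hypothesis so that $G^c-v$ admits a proper Hamiltonian path ending at a prescribed vertex with a prescribed final colour, or by re-routing $P$ locally using the abundance of edges in $G^c-v$. Balancing the two regimes---sparse $v$ against dense remainder---and verifying that the charging across the blocked slots is genuinely injective is where the real work lies. The extremal configuration consisting of a rainbow complete multigraph on $n-1$ vertices together with one isolated vertex has $c\binom{n-1}{2}$ edges and no Hamiltonian path at all, which shows that the bound $c\binom{n-1}{2}+1$ is best possible and that the insertion analysis must be tight.
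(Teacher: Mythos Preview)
The paper does not contain a proof of this theorem: it is quoted from \cite{paperPHP} and used only as a black-box tool (in the proofs of Theorem~\ref{3colgen}, Lemma~\ref{lemma3colrd3}, and Theorem~\ref{3colrd3}). Consequently there is no ``paper's own proof'' to compare your proposal against.

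As for the proposal itself, it is an outline rather than a proof. Your averaging argument for the existence of a removable vertex is correct, but the insertion step is explicitly left open: you write that ``balancing the two regimes \dots\ is where the real work lies'' and offer two alternative plans (strengthened induction with prescribed endpoint and end-colour, or local re-routing) without carrying either one out. Note also that the sentence ``with three or more colours available on each pair the constraint $\alpha\neq\beta$ \dots\ still leaves a free choice'' overstates what $c\ge 3$ gives you: the palette has at least three colours, but a particular pair $vv_i$ may carry zero, one, or two edges, so the counting of blocked slots against missing edges at $v$ needs to be made precise rather than asserted. Until the charging is actually written down and shown to cover the low-degree regime, the argument has a genuine gap at its central step.
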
 

The next lemma is the corresponding result for proper Hamiltonian cycles to the one for proper Hamiltonian paths presented in~\cite{paperPHP}. 
It reduces the case $c\geq 4$ just to the case $c=3$. We omit the proof since it is exactly the same as for paths.

\begin{lemma}\label{to3colorscycle} 
Let $\ell$ be a positive integer. Let $G^c$ be a $c$-edge-colored connected
multigraph on $n$ vertices and $m \geq c \ \ell + 1$ edges, where $c \geq 4$. There
exists a color $c_j$ such that if we color the edges of $G^{c_j}$ with another color
and we delete parallel edges with the same color, then the resulting
$(c-1)$-edge-colored multigraph $G^{c-1}$ is connected and has $m' \geq (c-1) \ell + 1$
edges. Furthermore, if $G^{c-1}$ has a proper Hamiltonian cycle then $G^c$ has one
too. Also, if $rd(G^c)=c$, then $rd(G^{c-1})=c-1$.
\end{lemma}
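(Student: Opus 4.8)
The plan is to recolor a color class that is as small as possible and then check the four assertions one at a time, the only genuinely delicate one being the preservation of properness when a Hamiltonian cycle is lifted back to $G^c$. For the choice of color, let $m_i$ denote the number of edges of color $i$, so that $\sum_{i=1}^c m_i = m$. I would pick $c_j$ to be a color with the fewest edges, hence $m_{c_j}\le \lfloor m/c\rfloor$, and recolor every $c_j$-edge with an arbitrary second color $c_k$, deleting one edge from each pair that ends up with two parallel $c_k$-edges. Each deletion corresponds to a pair carrying both a $c_j$- and a $c_k$-edge, so the number of deletions is at most $m_{c_j}\le \lfloor m/c\rfloor$, giving $m'\ge m-\lfloor m/c\rfloor$. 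Since the function $m\mapsto m-\lfloor m/c\rfloor$ is non-decreasing and takes the value $(c\ell+1)-\ell=(c-1)\ell+1$ at $m=c\ell+1$, the hypothesis $m\ge c\ell+1$ yields $m'\ge (c-1)\ell+1$, as required.

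Connectivity and the rainbow-degree claim I expect to be routine. A deletion removes an edge only when the same pair retains a parallel edge of the same color, so the underlying (uncolored) adjacency of $G^c$ is unchanged; hence $G^{c-1}$ is connected because $G^c$ is. For the rainbow degree, suppose $rd(G^c)=c$, so that every vertex $x$ meets all $c$ colors. After recoloring, color $c_j$ disappears while every color other than $c_k$ is untouched at $x$; moreover $x$ still meets $c_k$, since any $c_j$- (or $c_k$-) edge at $x$ becomes a $c_k$-edge and the deletions keep one copy on each pair. Thus $x$ meets exactly the $c-1$ surviving colors, so $rd(G^{c-1})=c-1$.

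The hard part is the lifting statement. Given a proper Hamiltonian cycle $C$ in $G^{c-1}$, I would rebuild a cycle $C'$ in $G^c$ on the same vertex sequence by assigning to each edge an original color: an edge whose $G^{c-1}$-color differs from $c_k$ keeps that color (it is literally an untouched edge of $G^c$), whereas an edge of color $c_k$ is realized by an actual $G^c$-edge on that pair, available in color $c_j$ or $c_k$ (or both). The key observation is that, because $C$ is proper, no two $c_k$-edges of $C$ are adjacent, and every neighbour of a $c_k$-edge carries a color in $\{1,\dots,c\}\setminus\{c_j,c_k\}$; since each lifted $c_k$-edge receives a color in $\{c_j,c_k\}$, it automatically differs from both neighbours, whatever admissible choice is made. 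Adjacent edges that both avoid $c_k$ keep their distinct $G^{c-1}$-colors, so no conflict arises there either. Hence $C'$ is a proper Hamiltonian cycle of $G^c$, completing the reduction; the argument for paths is identical, with ``cycle'' replaced by ``path''.
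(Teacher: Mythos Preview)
Your argument is correct in all four parts: the pigeonhole choice of a smallest color class $c_j$ gives the edge bound via $m'\ge m-\lfloor m/c\rfloor\ge (c-1)\ell+1$; connectivity is preserved because deletions only occur where a parallel edge survives; the rainbow-degree computation is straightforward; and the lifting step is handled cleanly by the observation that any $c_k$-edge of the cycle in $G^{c-1}$ has both neighbours colored outside $\{c_j,c_k\}$, so either lift in $\{c_j,c_k\}$ is admissible.

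As for comparison with the paper: the paper does not actually prove this lemma. It states that the proof is ``exactly the same as for paths'' and cites the companion paper on proper Hamiltonian paths. Your write-up is therefore a complete, self-contained version of what the authors left implicit; the choice of a minimum-size color class and a single fixed target color $c_k$ is the natural route and almost certainly coincides with the cited argument. One minor remark: your lifting argument in fact works for any $c\ge 3$, so the hypothesis $c\ge 4$ is there only because the authors want to stop the reduction at three colors, not because the lemma fails otherwise.
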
 

This paper is organized as follows: In Section~\ref{2_edge_col} 
we study proper Hamiltonian cycles in $2$-edge-colored multigraphs and in Section~\ref{3_edge_col} 
we study proper Hamiltonian cycles in $c$-edge-colored multigraphs, for $c \geq 3$. This division is because of different bounds and proofs. 

\section{Bounds for 2-edge-colored multigraphs}\label{2_edge_col}

In this section we study the existence of proper Hamiltonian cycles in $2$-edge-colored 
multigraphs. We present two main results. The first one involves the number of edges. The second one, the rainbow degree and 
the number of edges. Both results are tight.

The lemma below will be useful for Theorem~\ref{s1}

\begin{lemma}\label{s0} Let $G^c$ be a $2$-edge-colored multigraph colored with $\{r,b\}$.
Assume that $G^c$ contains a proper cycle $C$ of length at most $n-2$, and that there exists a red edge $xy$ in $G^c-C$.  If
$d^b_C(x)+d^b_C(y)> |C| $, then $G^c$ has a proper cycle of length $|C|+2$
containing $xy$.  
\end{lemma}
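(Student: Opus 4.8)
The plan is to prove this by inserting the edge $xy$ into $C$ via an Ore-type counting argument, subdividing a single well-chosen edge of $C$. The starting observation is structural: since $C$ is properly colored with only the two colors $r$ and $b$, consecutive edges of $C$ must differ, so the colors alternate around $C$. In particular $|C|$ is even, and the blue edges of $C$ form a perfect matching $M$ on $V(C)$ (each cycle vertex is incident to exactly one blue cycle edge).

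Next I would analyze what a valid insertion requires locally. Write $C = v_1v_2\cdots v_kv_1$ and consider replacing an edge $v_iv_{i+1}$ by one of the two paths $v_i\,x\,y\,v_{i+1}$ or $v_i\,y\,x\,v_{i+1}$. Because $xy$ is red, properness at $x$ and at $y$ forces the two new cycle edges incident to $x$ and $y$ to be blue; then properness at $v_i$ and $v_{i+1}$ forces the surviving cycle edges $v_{i-1}v_i$ and $v_{i+1}v_{i+2}$ to be red. By the alternation this occurs exactly when the removed edge $v_iv_{i+1}$ is itself blue, i.e.\ when $v_iv_{i+1}\in M$. Thus a matching edge $v_iv_{i+1}$ produces a proper cycle of length $|C|+2$ through $xy$ precisely when one of its endpoints lies in $A := N^b_C(x)$ and the other in $B := N^b_C(y)$, the two orientations corresponding to the two insertions; call such an edge \emph{good}.

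Finally I would count. For a matching edge $e = uw$ set $\mathrm{cont}(e) = [u\in A]+[u\in B]+[w\in A]+[w\in B]$. Since $M$ partitions $V(C)$, summing over $M$ gives $\sum_{e\in M}\mathrm{cont}(e) = |A|+|B| = d^b_C(x)+d^b_C(y) > |C| = 2|M|$, so some edge satisfies $\mathrm{cont}(e)\ge 3$. But $\mathrm{cont}(e)\ge 3$ forces (by pigeonhole) one endpoint to lie in $A\cap B$ and the other in $A\cup B$, and in either orientation such an edge is good. Inserting $xy$ across that good edge yields the desired proper cycle of length $|C|+2$ containing $xy$.

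The conceptual crux is the second step: correctly tracking the four properness constraints at $x$, $y$, $v_i$, $v_{i+1}$ and deducing that only a \emph{blue} cycle edge can be broken. Once this is established, the alternation converts the blue edges into a perfect matching and the averaging bound $|A|+|B| > |C| = 2|M|$ finishes the argument immediately, so I expect no further difficulty.
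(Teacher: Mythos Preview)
Your proof is correct and is essentially the paper's own argument: partition $V(C)$ into the $|C|/2$ pairs given by a monochromatic perfect matching, average $d^b_C(x)+d^b_C(y)>|C|=2|M|$ to find a pair receiving at least three blue edges from $\{x,y\}$, and insert $xy$ across that pair. The only difference is that you partition by the \emph{blue} edges of $C$---which your four-constraint properness analysis correctly identifies as the edge that must be broken---whereas the paper's write-up groups by the red edges; the counting is identical either way.
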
 
\begin{proof} Set $C=x_1y_1x_2y_2\ldots x_sy_sx_1$, where $x_iy_i$ are the blue edges of $C$, $i=1,2,\ldots ,s$. Then
$d^b_{\{x_i, y_i\}}(x)+d^b_{\{x_i,y_i\}}(y) \leq 2$. Otherwise, if $d^b_{\{x_i,
y_i\}}(x)+d^b_{\{x_i,y_i\}}(y)$ $\geq 3$, then either $x_1y_1x_2y_2\ldots x_i x y y_i \ldots x_sy_sx_1$ 
or $x_1y_1x_2y_2\ldots x_i y x y_i \ldots x_sy_sx_1$ is the desired cycle. It follows that $\sum_i d^b_{\{
x_i, y_i\}}(x)+d^b_{\{x_i,y_i\}}(y) \leq 2 \frac {|C|}{2}=|C|$, a contradiction
to the hypothesis of the lemma. This completes the proof.  
\end{proof}

Now we can prove the following theorem.

\begin{theorem}\label{s1} 
Let $G^c$ be a $2$-edge-colored multigraph on $n$ vertices, $n \geq 4$ colored with $\{r,b\}$.
If $m\geq 2\binom{n-1}{2}+n$, then $G^c$ has a proper
Hamiltonian cycle if $n$ is even, and a proper cycle of length $n-1$ otherwise.
\end{theorem}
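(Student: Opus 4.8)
The plan is to argue by contradiction on a longest proper cycle, converting the edge hypothesis into a bound on the number of \emph{missing} edges and using Lemma~\ref{s0} (together with its symmetric version obtained by swapping the roles of $r$ and $b$) as the extension engine.

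First I would restate the hypothesis in complementary form. The rainbow complete multigraph on $n$ vertices has $2\binom{n}{2}=n(n-1)$ edges, and $m\geq 2\binom{n-1}{2}+n=(n-1)^2+1$, so $\overline{G^c}$ has at most $n-2$ missing edges. Two consequences follow at once: every vertex satisfies $d^r(x)\geq 1$ and $d^b(x)\geq 1$ (a vertex incident to a single colour would by itself account for $n-1$ missing edges), so proper cycles exist; and to reach a contradiction it suffices to exhibit $n-1$ missing edges whenever the desired cycle is absent. Recall also that a proper cycle in a $2$-edge-coloured multigraph alternates colours and hence has even length, so the target length is the largest even integer at most $n$, namely $n$ for $n$ even and $n-1$ for $n$ odd.

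Next I would take a \emph{longest} proper cycle $C$, of even length $k$, and suppose for contradiction that $k$ is strictly smaller than the target, so that $H:=V(G^c)\setminus C$ has $h=n-k\geq 2$ vertices (indeed $h\geq 3$ when $n$ is odd). For $x\in H$ write $a_x=d^r_C(x)$ and $c_x=d^b_C(x)$; the number of edges missing between $x$ and $C$ is then $2k-(a_x+c_x)$. The central observation is that maximality of $C$ forbids the extensions of Lemma~\ref{s0}: whenever $xy$ is a red edge with $x,y\in H$ we must have $c_x+c_y\leq k$, and whenever $xy$ is a blue edge with $x,y\in H$ we must have $a_x+a_y\leq k$. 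Hence a doubly adjacent pair in $H$ forces $a_x+c_x+a_y+c_y\leq 2k$, i.e.\ at least $2k$ missing $C$-edges among $\{x,y\}$, while a singly adjacent pair forces at least $k$ missing $C$-edges plus the one missing $H$-edge of the absent colour.

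The remaining work, which I expect to be the main obstacle, is to handle \emph{non-adjacent} pairs of $H$ and to assemble a global count reaching exactly $n-1$. Lemma~\ref{s0} says nothing about a pair $x,y\in H$ joined by no edge, yet such a pair can still be absorbed into $C$ by a rerouting argument: cutting two edges of $C$ splits it into two arcs $A_1,A_2$, and $x\,A_1\,y\,A_2\,x$ is a proper cycle of length $k+2$ provided $x$ and $y$ can be attached to the four arc-endpoints with suitable colours. Since $x$ and $y$ have both colours available at every vertex they reach, a well-connected non-adjacent pair yields such an extension, contradicting maximality; consequently a non-adjacent pair must also miss many edges to $C$. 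I would first use these estimates to show that $h$ is small (a large $H$ forces far more than $n-1$ missing edges), and then finish by a short explicit analysis of the few surviving configurations—equivalently, by checking that in every non-extendable configuration the degree hypothesis of Theorem~\ref{Abou} holds on a suitable subgraph. The delicate point throughout is that the extremal example (a vertex incident to a single colour, with exactly $n-1$ missing edges) sits just beyond the hypothesis, so the counting must be driven to the precise threshold $n-1$ with no slack.
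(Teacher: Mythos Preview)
Your plan is genuinely different from the paper's, and its hard step is not the one the paper faces. The paper does \emph{not} run a longest-cycle argument: it proceeds by induction on $n$, first invoking Theorem~\ref{Abou} to dispose of the case where every vertex has both colour-degrees at least $\lceil (n+1)/2\rceil$, and otherwise picking a vertex $x$ with small red degree together with neighbours $y$ (red) and $z$ (blue), and \emph{contracting} $\{x,y,z\}$ to a single vertex $s$ with $N^b(s)=N^b_{G^c-\{x,z\}}(y)$ and $N^r(s)=N^r_{G^c-\{x,y\}}(z)$. A short edge count shows the contracted multigraph on $n-2$ vertices again satisfies the hypothesis, so induction yields a proper cycle that lifts back through $y\,x\,z$. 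Lemma~\ref{s0} is used only once, and in a very controlled way: for odd $n$, if the contracted vertex $s$ misses the $(n-3)$-cycle in $G'^c$, one applies Lemma~\ref{s0} to the \emph{pre-chosen} edges $xy$ and $xz$, whose existence was built into the construction; failure of both insertions already forces at least $2(n-3)>n-2$ missing edges.

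The gap in your plan is precisely the place you flag as ``the main obstacle'': the non-adjacent pairs in $H$. Your count for adjacent pairs is correct and tight (indeed, for $|H|=2$ and a single colour between $x,y$ you reach exactly $n-1$), but for a non-adjacent pair Lemma~\ref{s0} gives nothing, and your proposed two-cut rerouting $x\,A_1\,y\,A_2\,x$ does not come with a usable inequality: you would need to show that if no such rerouting exists then $\{x,y\}$ miss at least $n-3$ edges to $C$, and in two colours the parity constraints on the four attachment edges make this a genuine case analysis, not a one-line pigeonhole. The fallback ``check that Theorem~\ref{Abou} applies on a suitable subgraph'' is also unclear, since Theorem~\ref{Abou} needs the degree bound at \emph{every} vertex, and a single low-degree vertex outside $C$ defeats it. The paper sidesteps all of this because the induction step manufactures the off-cycle vertices $x,y,z$ with known adjacencies, so Lemma~\ref{s0} is always available when needed.
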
 

For the extremal example, consider a
rainbow complete $2$-edge-colored multigraph on $n-1$ vertices ($n$ even). Add
one new vertex $x$. Then add all possible edges in the same color between $x$ and the complete multigraph. 
Clearly, the resulting multigraph has $2\binom{n-1}{2}+n-1$ edges and it has no proper Hamiltonian cycle since $x$ 
has just a single incident color.

\begin{proof} The proof is by induction on $n$. The theorem is clearly true for small values of $n$, say
$n=4,5$. Let us suppose that $n\geq 6$. By Theorem~\ref{Abou}, if
for every vertex $x$ we have that $d^r(x)\geq \left\lceil \frac {n+1}{2} \right\rceil $ and
$d^b(x)\geq \left\lceil \frac {n+1}{2}\right\rceil $, then $G^c$ has a proper
Hamiltonian cycle for $n$ even and a proper cycle of length $n-1$ for $n$ odd.
Let us suppose therefore that for some vertex, say $x$, and for some color, say
red, $d^r(x) \leq \left\lceil \frac {n+1}{2}\right\rceil -1$. Notice now that
$d^r(x)>0$ and $d^b(x)>0$, otherwise, for example if $d^r(x)=0$, then $m\leq
2\binom{n}{2}-(n-1)<2\binom{n-1}{2}+n$, a contradiction. Similarly, if $d^r(x)+d^b(x)\leq 2$, then $m\leq
2\binom{n}{2}-2(n-2)<2\binom{n-1}{2}+n$, again a contradiction. Thus we may conclude
that there are two distinct neighbors, say $y$ and $z$, of $x$ such that
$r \in c(xy)$ and $b \in c(xz)$ in $G^c$. Replace the vertices $x,y,z$ by a new
vertex $s$ such that $N^b(s)= N^b_{G^c-\{x,z\}}(y)$ and  $N^r(s)=
N^r_{G^c-\{x,y\}}(z)$. The obtained multigraph, say $G'^c$, has $n-2$ vertices and
at least  $2\binom{n-1}{2}+n -[(n-1)+\left\lceil \frac {n+1}{2}
\right\rceil-1+2(n-2)] > 2\binom{n-2}{2}+n-1$ edges, i.e., the number of edges needed for the inductive hypothesis 
in a multigraph on $n-2$ vertices. So, for $n-2$ even, $G'^c$ has a proper Hamiltonian cycle and 
then coming back to $G^c$ we may easily find a proper Hamiltonian cycle in
$G^c$. Assume now that $n-2$ (and thus $n$) is odd. Then, $G'^c$ has a proper cycle, say $C$,
of length $n-3$. If $s$ belongs to $C$, then as before we may
easily find a proper cycle of length $n-1$ in $G^c$. Assume therefore that $s$
does not belong to $C$. By Lemma~\ref{s0},  if $d^b_C(x)+d^b_C(y)> |C|$ (or
$d^r_C (x)+d^r_C(z)>|C|$) then we may integrate the edge $xy$ (respectively
$xz$) in $C$ in order to obtain a cycle of length $n-1$. Otherwise, 
$d^b_C (x)+d^b_C (y)\leq |C| $ and  $d^r_C (x)+d^r_C(z)\leq |C|$. But then the
number of edges of $G^c$ is at most $2\binom{n}{2}-(n-3)-(n-3)<2\binom{n-1}{2}+n$, again a contradiction. 
This completes the argument and the proof.  
\end{proof}

\begin{theorem}\label{2colrd2} 
Let $G^c$ be a $2$-edge-colored multigraph on $n$ vertices, $n \geq 9$ colored with $\{r,b\}$. 
If $rd(G^c)=2$ and $m\geq \binom{n}{2}+\binom{n-2}{2}+3$, then $G^c$ has a proper
Hamiltonian cycle if $n$ is even, and a proper cycle of length $n-1$ otherwise.
\end{theorem}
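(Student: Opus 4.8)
The plan is to argue by induction on $n$, mirroring the proof of Theorem~\ref{s1} but exploiting the hypothesis $rd(G^c)=2$ to lower the edge requirement. After disposing of the base cases for small $n$ directly, I would first apply Theorem~\ref{Abou}: if $d^r(v),d^b(v)\geq\lceil\frac{n+1}{2}\rceil$ for every vertex $v$, the conclusion follows at once. Otherwise there is a vertex $x$ with, say, $d^r(x)\leq\lceil\frac{n+1}{2}\rceil-1$. Here the role of $rd(G^c)=2$ is immediate and clean: it guarantees $d^r(x)\geq1$ and $d^b(x)\geq1$, so $x$ has a red neighbor $y$ and a blue neighbor $z$ without any auxiliary counting. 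This is precisely the step that cost extra edges in Theorem~\ref{s1}, and replacing it by the rainbow hypothesis is what lets us drop the edge threshold from $2\binom{n-1}{2}+n$ to $\binom{n}{2}+\binom{n-2}{2}+3$.

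Next I would contract $\{x,y,z\}$ to a single vertex $s$ exactly as before, setting $N^b(s)=N^b_{G^c-\{x,z\}}(y)$ and $N^r(s)=N^r_{G^c-\{x,y\}}(z)$, obtaining $G'^c$ on $n-2$ vertices. A computation analogous to the one in Theorem~\ref{s1}, now using $d^r(x)\leq\lceil\frac{n+1}{2}\rceil-1$, should show that $m'\geq\binom{n-2}{2}+\binom{n-4}{2}+3$, the bound required to invoke the inductive hypothesis on $G'^c$. If $G'^c$ admits a proper Hamiltonian cycle (for $n-2$ even), or a proper cycle of length $n-3$ containing $s$ (for $n-2$ odd), then expanding $s$ back along the path $z\,x\,y$ recovers a proper cycle of the desired length in $G^c$: the incident red and blue edges of $s$ attach to $z$ and $y$ respectively, while the internal edges $zx$ (blue) and $xy$ (red) keep the coloring proper.

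The first genuinely new difficulty is that contraction need not preserve $rd=2$: an outside vertex $w$ loses a color only if all its edges of that color run into $\{x,y\}$ or $\{x,z\}$, i.e.\ only if $w$ has degree at most $2$ in that color. I would show that, under the edge bound, such vertices are scarce and that $x,y,z$ can be chosen to avoid creating one; alternatively, when $rd(G'^c)$ does drop, the edges saved at the offending low-degree vertex should push $m'$ above the (larger) threshold of Theorem~\ref{s1}, which carries no rainbow hypothesis, and I would apply that theorem instead. Maintaining this dichotomy cleanly is one of the main obstacles.

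The sharpest point, however, is the odd case with $s\notin C$, where $C$ has length $n-3$. Applying Lemma~\ref{s0} to the red edge $xy$ and, after swapping colors, to the blue edge $xz$, the only way to be stuck is $d^b_C(x)+d^b_C(y)\leq|C|$ and $d^r_C(x)+d^r_C(z)\leq|C|$; adding the analogous inequalities for any edge of $G^c$ lying inside $\{x,y,z\}$, an edge count gives $m\leq\binom{n}{2}+\binom{n-2}{2}+3$. Since this merely meets the hypothesis with equality, a pure counting contradiction is unavailable, and here I must use $rd(G^c)=2$ decisively: in the extremal configuration $C$ is rainbow-complete while $y$ and $z$ are respectively red- and blue-complete to $C$, and I would argue that two of $x,y,z$ can then be inserted into $C$ to build a proper cycle of length $n-1$ directly, contradicting being stuck. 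Pinning down this tight case, which is exactly where the ``$+3$'' is consumed, is where the bulk of the work lies.
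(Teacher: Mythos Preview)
Your overall plan tracks the paper's: induction, Theorem~\ref{Abou} to dispose of the high-degree case, a three-vertex contraction to drop to $n-2$, and a direct structural argument for the odd subcase $s\notin C$. That last part is essentially what the paper does as well. The real gap is your treatment of the case $rd(G'^c)<2$.

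Your fallback to Theorem~\ref{s1} does not go through numerically. On $n-2$ vertices Theorem~\ref{s1} requires
\[
m'\;\geq\;2\binom{n-3}{2}+(n-2)\;=\;n^2-6n+10,
\]
whereas the contraction (using $d^r(x)\leq\lceil\tfrac{n+1}{2}\rceil-1$) only guarantees
\[
m'\;\geq\;\binom{n-2}{2}+\binom{n-4}{2}+3\;=\;n^2-7n+16,
\]
which is short by $n-6$. The offending vertex $w$ lies outside the contracted triple, so its low colored degree does not reduce the number of edges deleted in the contraction; no edges are ``saved'' in $G'^c$. Even if you re-center the contraction at $w$ itself (with, say, $d^r(w)\leq 2$), the removal count drops to at most $3n-3$, giving $m'\geq n^2-6n+9$, still one edge below the Theorem~\ref{s1} threshold in the worst case $d^r(w)=2$. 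So neither version of your second option works.

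The paper proceeds differently here. If $rd(G'^c)<2$, it extracts a vertex $x$ of $G^c$ with some colored degree at most $2$ and re-runs the contraction around $x$; either this restores $rd=2$ (and induction applies) or it exhibits a second such vertex. Two vertices each missing at least $n-3$ edges in one color already account for all $2n-6$ edges of $\overline{G^c}$ permitted by the hypothesis, so every other colored edge of $G^c$ is present. From this rigid structure the cycle is then built by hand: take a suitable proper path through the rainbow-complete part and close it through the two exceptional vertices. Your first option (``choose $x,y,z$ to avoid creating one'') points vaguely in this direction but stops exactly where the work begins; the structural endgame when avoidance fails is the missing idea.

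A smaller point: $rd(G^c)=2$ alone does not give \emph{distinct} neighbors $y\neq z$ of $x$ in different colors (both edges could go to the same vertex). The paper disposes of this with the one-line count $|E(\overline{G^c})|\geq 2n-4>2n-6$.
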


For the extremal multigraph, consider a complete blue graph, say $A$, on $n-2$ vertices, $n$ even. Add two
new vertices $v_1,v_2$ and join them to a vertex $v$ in $A$ with blue
edges.  Finally, superimpose the obtained graph with a complete red graph on $n$
vertices. Although the resulting $2$-edge-colored multigraph has rainbow degree two and $\binom{n}{2}+\binom{n-2}{2}+2$ edges, 
it has no proper Hamiltonian cycle. For this, observe that there is not a perfect blue matching
because $v_1$ and $v_2$ are only adjacent in blue to $v$.

\begin{proof}
The proof is by induction on $n$. For $n=9,10$, the result holds by inspection.
Suppose now that $n\geq 11$. Observe that $|E(\overline{G^c})| \leq 2n-6$. By Theorem~\ref{Abou}, if
for every vertex $v \in G^c$ we have that $d^r(v)\geq \left\lceil \frac {n+1}{2} \right\rceil $ and
$d^b(v)\geq \left\lceil \frac {n+1}{2}\right\rceil $, then $G^c$ has a proper
Hamiltonian cycle if $n$ is even, and a proper cycle of length $n-1$ otherwise. Suppose then that there exists a vertex $v$ such that 
$d^r(v)\leq \lceil \frac {n+1}{2}\rceil -1$. 

Now $v$ has two different neighbors $u$ and $w$ such that $b \in c(vu)$ and $r \in c(vw)$. Otherwise, if $v$ has just one neighbor with edges on both 
colors, then $|E(\overline{G^c})| = 2n-4 > 2n-6$, a contradiction. We construct a new multigraph $G'^c$ by replacing the vertices $v,u$ and $w$ with a 
new vertex $z$ such that $N^r(z)=N_{G^c-\{v,w\}}^r(u)$ and $N^b(z)=N_{G^c-\{v,u\}}^b(w)$. 
We will show that either $rd(G'^c)=2$ and we can apply the inductive hypothesis, or there are two vertices in $G^c$ with $r-$ or $b-$colored degree at 
most two and thus, we can find the desired cycle for $G^c$ directly.

Assume first that $rd(G'^c)=2$. We see that $G'^c$ has at least
$\binom{n}{2}+\binom{n-2}{2}+3 - (n-1) - (\lceil \frac {n+1}{2}\rceil -1) - (n-3) - (n-3) -2$ edges. This number, for $n \geq 11$, is greater or equal than 
$\binom{n-2}{2}+\binom{n-4}{2}+3$, i.e., the number of edges needed to have a proper Hamiltonian cycle or a proper cycle of length $n-3$ ($n$ odd) in $G'^c$.
So by the inductive hypothesis we obtain such a cycle. For $n$ even, it is easy to obtain a proper Hamiltonian cycle in $G^c$, since we deleted the 
appropriate edges at $u$ and $w$. For $n$ odd, if the new vertex $z$ is on the cycle of length $n-3$, it is exactly the same as for the even case to 
obtain a proper cycle of length $n-1$ in $G^c$. Now, if the vertex $z$ is not on the proper cycle in $G'^c$, then neither are the 
vertices $v,u,w$ in $G^c$. Let $x_1y_1x_2y_2\ldots x_ky_kx_1$ be
the proper cycle, for $2k = n-3$. Suppose without losing generality that the edges $x_iy_i$ are red and the edges $y_ix_{i+1}$ are blue. 
If we can add neither the blue edge $vu$ nor the red edge $vw$ to the cycle, then we have at most two red edges between the endpoints of the edge 
$vu$ and the endpoints of the edges $x_iy_i$ and at most two blue edges between the endpoints of the edge $vw$ and the endpoints of the edges $y_ix_{i+1}$.
So, since the length of the cycle is $n-3$, there are $\frac{n-3}{2} + \frac{n-3}{2} = 2n-6$ edges in $\overline{G^c}$. Therefore, we have all possible 
edges in $G^c$ except those missing ones. In particular we have the red edge $uv$ and the red edge $uw$. Now, if there is a blue edge from $v$ to some vertex 
$x_i$ ($y_i$) of the cycle, we extend the proper cycle to a proper cycle of length $n-1$ with the red edge $uv$ adding the blue edge $vx_i$ ($vy_i$), 
the blue edge $uy_{i-1}$ ($ux_{i+1}$) and removing the blue edge $x_iy_{i-1}$ ($y_ix_{i+1}$) from the cycle. The blue edge $uy_{i-1}$ ($ux_{i+1}$) clearly exists 
since $u$ has all possible blue incident edges. If there is no blue edge from $v$ to some vertex $x_i$ ($y_i$) of the cycle, we have that there exists 
a blue edge from $w$ to some vertex $x_i$ ($y_i$) of the cycle. Otherwise, we have that $d^b(v)= 2$ and $d^b(w)= 2$, and we have covered that case before. 
Finally, we extend the proper cycle to a proper one of length $n-1$ exactly as we did to add the red edge $uv$ but now with the red edge $uw$.

Suppose last that $rd(G'^c)<2$. Then there exists a vertex $x_{G'^c}$ with $rd(x_{G'^c}) < 2$ in $G'^c$. 
Therefore in $G^c$ the vertex corresponding to $x_{G'^c}$, say $x$, has either $d^r(x)\leq 2$ or $d^b(x)\leq 2$, and it is clearly different 
from $v$. We can assume that $d^r(v)\leq 2$ (similar if $d^b(v)\leq 2$) and it follows that $v$ and $x$ have colored degree at most two in $G^c$. 
Otherwise, take another neighbor of $x$, say $y$, such that the edges $xv$, $xy$ have different colors. Then 
we can replace in $G^c$ the vertices $x,v,y$ with a new vertex $z$ as described before, obtaining a new multigraph $G''^c$.
In this case, if $rd(G''^c)=2$ we apply the previous case. If not, there is a vertex $x'_{G''^c}$ in $G''^c$ with $rd(x'_{G''^c}) < 2$. 
Thus in $G^c$ the vertex corresponding to $x'_{G''^c}$, say $x'$, has either $d^r(x')\leq 2$ or $d^b(x')\leq 2$ and also $x' \neq x$. 
Therefore we could obtain $x$ and $x'$ with colored degree at most two in $G^c$.
Observe now that as $d^r(v)\leq 2$ and, either $d^r(x)\leq 2$ or $d^b(x)\leq 2$, then $E(\overline{G^c}) = 2n-6$, and this happens when 
$d^r(v)= 2$ and, either $d^r(x)= 2$ or $d^b(x)= 2$. 
Otherwise we have a contradiction with the assumption on the number of edges. So $G^c$ must have all possible edges except those $2n-6$ edges 
already missing at $v$ and $x$. We have the following cases now depending on the vertex $x$ in $G^c$.
\begin{enumerate}[a)]
 \item $x=u$, therefore $d^r(u)= 2$, that is, there is a red edge between $u,v$ and a red edge between $u,w$.  
Since $v$ and $u$ have all possible blue incident edges, take a neighbor of $v$, say $v' \neq w$, such that $b \in c(vv')$ and a neighbor of $w$,
say $w'\neq v'$, such that $b \in c(ww')$. Now, as $G^c-\{v,u\}$ is rainbow complete take a proper Hamiltonian path, for $n$ even, or a proper path
of length $n-3$, for $n$ odd, from $v'$ to $w'$ such that the color of its first edge is the same as its last edge, that is, color $r$. 
Call this path $P$. Then in $G^c$, $vv'Pw'wv$ is a proper Hamiltonian cycle if $n$ is even and a proper cycle of length $n-1$ otherwise. 
 \item $x=w$, therefore $d^b(w)= 2$, that is, there is a blue edge between $w,v$ and a blue edge between $w,u$. This case is similar to the previous one, just taking 
a proper Hamiltonian path, for $n$ even, or a proper path of length $n-3$, for $n$ odd, in $G^c-\{v,w\}$ from $u$ to a neighbor $v'$ of $v$
such that $b \in c(vv')$ and the color of the first edge of the path is the same as its last edge, that is, color $r$. 
If we call this path $P$, then in $G^c$, $vv'Puwv$ is a proper Hamiltonian cycle if $n$ is even and a proper cycle of length $n-1$ otherwise.
 \item $x\neq u$ and $x \neq w$. Suppose first that $d^r(x)=2$. Then there is a red edge between $x,v$ and a red edge between $x,w$. In this case we proceed exactly as 
in the case (a) adding the red edge $vx$ to the proper Hamiltonian path (or proper path of length $n-3$) that exists in $G^c-\{v,x\}$. 
Suppose now that $d^b(x)=2$. Then there is a blue edge between $x,v$ and a blue edge between $x,u$. Here, we repeat the argument of case (b), 
adding the blue edge $vx$ to the proper Hamiltonian path (or proper path of length $n-3$) that exists in $G^c-\{v,x\}$.
\end{enumerate}

The proof is now complete.
\end{proof}

\section{Bounds for c-edge-colored multigraphs, $c\geq 3$}\label{3_edge_col}

In this section we study the existence of proper Hamiltonian cycles in $c$-edge-colored
multigraphs, $c\geq 3$. We present two main results. The first one involves the number of edges. The second involves the rainbow degree and 
the number of edges. Both results are tight. Finally, we state a conjecture involving the rainbow degree, the number of edges and the connectivity.

\begin{theorem}\label{3colgen} 
Let $G^c$ be a $c$-edge-colored multigraph on $n$ vertices, $n \geq 4$ and $3 \leq c < n$. 
If $m\geq c\binom{n-1}{2}+n$, then $G^c$ has a proper Hamiltonian cycle. 
\end{theorem}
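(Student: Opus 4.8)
The plan is to argue by induction on $n$, in the same spirit as Theorem~\ref{s1}, but using Theorem~\ref{3colorsPHP} (which needs only the surplus ``$+1$'') as the basic supply of a proper spanning structure and spending the extra $n-1$ edges of the hypothesis to upgrade a path into a cycle. After dispatching a few small base cases by inspection, I would first record the easy observation that every vertex $v$ satisfies $rd(v)\geq 2$: if some $v$ had all its incident edges in a single colour, then $d(v)\leq n-1$ and hence $m\leq c\binom{n-1}{2}+(n-1)$, contradicting the hypothesis. In particular the extremal construction preceding the statement (a rainbow-complete multigraph on $n-1$ vertices together with one monochromatic vertex) is exactly the obstruction this bound is designed to exclude, which is why the ``$+n$'' cannot be weakened to the path bound ``$+1$''.

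For the inductive step, let $v$ be a vertex of minimum degree. The key dichotomy is whether $v$ can be removed while keeping enough edges for the induction. A short computation using $\binom{n-1}{2}-\binom{n-2}{2}=n-2$ shows that $G^c-v$ has at least $c\binom{n-2}{2}+(n-1)$ edges precisely when $d(v)\leq c(n-2)+1$; in that regime the induction hypothesis applied to the $(n-1)$-vertex multigraph $G^c-v$ yields a proper Hamiltonian cycle $C$ of $G^c-v$, and it remains only to splice $v$ back in. Concretely, I would look for an edge $u_iu_{i+1}$ of $C$ that can be replaced by the two-edge path $u_iv u_{i+1}$ while staying proper, i.e.\ colours $\alpha=c(u_iv)$ and $\beta=c(vu_{i+1})$ with $\alpha\neq\beta$, $\alpha\neq c(u_{i-1}u_i)$ and $\beta\neq c(u_{i+1}u_{i+2})$, using here that $rd(v)\geq 2$.

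The hard part is exactly this reinsertion (equivalently, closing a proper Hamiltonian path of $G^c$ into a cycle): a low-degree $v$ is cheap to delete but expensive to reinsert. I would handle it by a counting argument, assuming insertion fails at \emph{every} edge of $C$ and charging each failure to at least one missing edge at $v$ or to a forbidden colour coincidence along $C$; summing over the $n-1$ edges of $C$ then forces $\overline{G^c}$ to contain more than the $c(n-1)-n$ edges permitted by the hypothesis, a contradiction. This is precisely where the full ``$+n$'' surplus is consumed. Two further points need attention. First, the complementary \emph{dense} case, in which every vertex has $d(v)\geq c(n-2)+2$: here each vertex misses at most $c-2$ coloured edges, so the underlying simple graph is complete, and a proper Hamiltonian cycle is built directly by routing a Hamiltonian ordering around the few missing colours, which is possible because $c\geq 3$ always leaves an admissible colour at each step. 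Second, the boundary value $c=n-1$, where the deleted instance would violate $c<n-1$; this can be absorbed into additional base cases or by first applying Lemma~\ref{to3colorscycle} to drop one colour (note, however, that this lemma preserves only the path-level surplus ``$+1$'', so it keeps $c$ in range but cannot itself replace the reinsertion step). I expect the reinsertion/closing count to be the main obstacle, with the dense case and the $c=n-1$ boundary as the principal technical checks.
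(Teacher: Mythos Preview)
Your delete-and-reinsert scheme is genuinely different from the paper's argument, and the step you flag as the ``main obstacle'' is in fact a real gap. Take $c=3$, $n=6$, and let $G^c$ consist of a rainbow complete multigraph on $\{a,b,x,y,z\}$ together with a vertex $v$ joined only to $a$ and $b$, in all three colours each. Then $m=3\binom{5}{2}+6=36$, the hypothesis holds, and $d(v)=6\leq c(n-2)+1=13$, so you delete $v$ and induct. But the induction hand you \emph{some} proper Hamiltonian cycle of $G^c-v$, say $C=a\,x\,b\,y\,z\,a$, in which $a$ and $b$ are not consecutive. No single-edge insertion of $v$ into $C$ is possible, and your proposed charging (``each failure costs at least one missing edge'') yields only $n-1=5$ missing edges, far below the $c(n-1)-n=9$ permitted; even charging $c-1$ per failure cannot avoid double counting, since the edges $a\,x$, $x\,b$, $b\,y$, $y\,z$, $z\,a$ all fail for overlapping reasons (the same missing bundles $vx,vy,vz$). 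So the counting sketch, as stated, does not close. You would need either control over \emph{which} proper Hamiltonian cycle the induction returns, or a more elaborate rerouting of $C$ to make $a,b$ adjacent---neither of which follows from the inductive hypothesis alone.

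The paper sidesteps exactly this difficulty. For $c=3$ it does not delete a vertex; it finds two vertices $v,w$ joined by all three colours (such a pair exists since $m>2\binom{n}{2}$) and \emph{contracts} them to a single vertex $z$ with $N^r(z)=N^r(v)\setminus\{w\}$, $N^b(z)=N^b(w)\setminus\{v\}$, $N^g(z)=N^g(v)\cap N^g(w)$. A short degree bound ($d(v),d(w)\leq 3n-5$ in the interesting case) shows at most $3n-5$ edges are lost, preserving the inductive edge count on $n-1$ vertices; and uncontraction is now automatic, because whatever two colours the cycle uses at $z$ can be realised at $v$ and $w$ with the third colour on the edge $vw$. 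For $c\geq 4$ the paper does not induct on $n$ at all: either some pair of colours $j,l$ has overlap $|mono(E^j)\cap mono(E^l)|\leq\binom{n-1}{2}$, in which case merging $j$ into $l$ keeps the full ``$+n$'' surplus and drops $c$ by one, or every pair overlaps in at least $\binom{n-1}{2}+1$ positions, and then the simple graph of $(j,l)$-double edges is Hamiltonian by a $2$-connectivity/Chv\'atal--Erd\H{o}s--type result (Byer--Smeltzer), yielding a proper Hamiltonian cycle directly. Your dense-case greedy (``$c\geq 3$ always leaves an admissible colour'') also needs more care at the closing edge, where two colours are forbidden and only two may be available; the paper's route avoids this as well.
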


For the extremal example, consider a rainbow complete $c$-edge-colored multigraph on $n-1$ vertices. Add
one new vertex $x$.  Then add all possible edges in one color, say red, between $x$ and the complete multigraph. 
Clearly, the resulting multigraph has $c\binom{n-1}{2}+n-1$ edges and it has no proper Hamiltonian cycle since $x$ 
has only edges on color red incident to it.

\begin{proof} Suppose first $c\geq 4$ and that the conclusion of Lemma~\ref{to3colorscycle} is not sufficient in order to consider just the case $c=3$.
Let $E^j$ denote the edges of $G^c$ on color $j$ for $j=1,\ldots,c$. Let $mono(E^j)$ be the set of edges as in $E^j$ but in the underlying simple graph $G$. 
Now it follows for each pair $j$, $l$ of distinct colors that $|mono(E^j) \cap mono(E^l)| \geq \binom{n-1}{2}+1$. 
That is, there are at least $\binom{n-1}{2}+1$ parallel edges on colours $j$ and $l$.
Now take any two colors $j,l$ and consider the uncolored simple graph $G$ having same vertex set 
as $G^c$ and for each pair of vertices $x,y$, we add the uncolored edge $xy$ in $G$ if and only if $xy \in E^j$ and $xy \in E^l$ in $G^c$. 
Clearly $G$ has at least $\binom{n-1}{2}+1$ edges. We distinguish between two cases depending on the connectivity of $G$.

Suppose first that $G$ is not $2$-connected. Therefore $G$ has exactly $\binom{n-1}{2}+1$ edges, that is, $G$ is isomorphic to a 
complete graph on $n-1$ vertices, say $B$, plus one vertex, say $v$, adjacent to only one vertex of $B$. 
Consequently in $G^c$, $v$ has two different neighbors, say $x$ and $y$, such that the edges $vx$, $vy$ have different colors and these colors are not $j$ or $l$.
Now as $B$ is a complete graph, there exists a Hamiltonian path from $x$ to $y$, that is,
an alternating path $P$ on colors $j$, $l$ in $G^c$. Then $vxPyv$ is a proper Hamiltonian cycle in $G^c$.

Suppose next that $G$ is $2$-connected. Now as $G$ has at least $\binom{n-1}{2}+1$ edges and it is $2$-connected, 
by~\cite{ByerSmeltzerDM2007}, $G$ contains a Hamiltonian cycle $C=v_1\ldots v_nv_1$. If $n$ is even, then $C$ is a proper Hamiltonian cycle in $G^c$ 
using alternatingly colors $j$ and $l$. Suppose therefore that $n$ is odd. Let $k$ be a color in $G^c$ different from $j$ and $l$.
As $|mono(E^j) \cap mono(E^k)| \geq \binom{n-1}{2}+1$, there are at most $n-2$ pairs of vertices $x$ and $y$ in $G^c$ such that 
the edge $xy$ on color either $j$ or $k$, is missing.
However, as the cycle $C$ has $n$ edges, there exist two consecutive vertices of $C$ joined by an edge on color $k$. 
Suppose without loss of generality that $k \in c(v_1v_n)$. We can conclude that $v_1\ldots v_nv_1$ is a proper Hamiltonian cycle in $G^c$ where 
$v_1\ldots v_n$ is an alternating path on colors $j$, $l$ and the edge $v_nv_1$ is on color $k$.

Assume therefore that $c=3$ and $G^c$ has the edges colored with $\{r,b,g\}$. So we have that $m\geq 3\binom{n-1}{2}+n$. 
We prove the theorem by induction on $n$. For $n=4$ the theorem is easily checked. Suppose then that 
$n \geq 5$. Consider two vertices $v$ and $w$ such that there are three parallel edges between them on colors $r,b$ and $g$ respectively. 
It can be checked that such two vertices always exist, otherwise if there are at most two edges between every pair of vertices, then
the number of edges would be at most $2\binom{n}{2}$ contradicting the hypothesis. 

Suppose first that either $d(v) \geq 3n-4$ or $d(w) \geq 3n-4$, say $d(v) \geq 3n-4$. Then, by removing $v$ from $G^c$ we obtain a multigraph with 
at least $3\binom{n-1}{2}+n-3(n-1)=3\binom{n-2}{2}+n-3 \geq 3\binom{n-2}{2}+1$ edges. Therefore by Theorem~\ref{3colorsPHP}, 
$G^c-\{v\}$ contains a proper Hamiltonian path. Then, since $d(v) \geq 3n-4$ we can connect $v$ to the endpoints of the path in an appropriate way 
in order to obtain a proper Hamiltonian cycle in $G^c$. 

Suppose next that $d(v) \leq 3n-5$ and $d(w) \leq 3n-5$. Now consider a new multigraph $G'^c$ obtained from $G^c$ by contracting $v$ and $w$ to a single 
vertex $z$ such that $N^r(z) = N^r(v)-\{w\}, N^b(z) = N^b(w)-\{v\}$ and $N^g(z) = N^g(v) \cap N^g(w)$. Within this contraction we removed at most $3n-5$ edges 
from $G^c$; thus $G'^c$ has at least $3\binom{n-1}{2}+n -(3n-5)\geq 3\binom{n-2}{2}+n-1$ edges. By induction, $G'^c$ has a proper Hamiltonian cycle 
and then we can easily obtain one in $G^c$. 
\end{proof}

\begin{theorem}\label{3colrd3} 
Let $G^c$ be a $c$-edge-colored multigraph on $n$ vertices, $n \geq 4$ and  $c\geq 3$. 
If $rd(G^c)=c$ and $m\geq c\binom{n-1}{2}+c+1$, then $G^c$ has a proper Hamiltonian cycle.
\end{theorem}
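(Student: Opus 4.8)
The plan is to peel off the colours $c\geq 4$ just as in Theorem~\ref{3colgen}, reducing to the genuinely new case $c=3$. Applying Lemma~\ref{to3colorscycle} with $\ell=\binom{n-1}{2}+1$ turns the hypothesis $m\geq c\binom{n-1}{2}+c+1=c\ell+1$ into a connected $(c-1)$-edge-colored multigraph with $m'\geq (c-1)\ell+1=(c-1)\binom{n-1}{2}+(c-1)+1$ edges, and the lemma also preserves the rainbow condition as $rd(G^{c-1})=c-1$. Hence both hypotheses reappear verbatim for $c-1$ colours, and iterating brings us to a connected $3$-edge-colored multigraph of rainbow degree $3$ with $m\geq 3\binom{n-1}{2}+4$; a proper Hamiltonian cycle found there lifts back through the lemma to $G^c$. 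It therefore suffices to treat $c=3$.

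For $c=3$ I would induct on $n$, checking the smallest values directly. The bound forces $|E(\overline{G^c})|\leq 3n-7$, so very few edges are missing, and since $m>2\binom{n}{2}$ there is always a pair $v,w$ joined by three parallel edges, one of each colour; this triple edge is what makes the contraction below reversible. Following the dichotomy of Theorem~\ref{3colgen}, if $\max\{d(v),d(w)\}\geq 3n-4$, say for $v$, then $v$ misses at most one of its $3(n-1)$ possible coloured edges; deleting $v$ leaves at least $3\binom{n-2}{2}+1$ edges, so Theorem~\ref{3colorsPHP} yields a proper Hamiltonian path on the remaining $n-1$ vertices, and because $v$ misses at most one edge the condition $rd(G^c)=3$ lets me always pick two differently coloured edges from $v$ to the two path-endpoints, closing a proper Hamiltonian cycle.

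Otherwise $d(v),d(w)\leq 3n-5$, and I would contract $v,w$ into a single vertex $z$ with $N^r(z)=N^r(v)\setminus\{w\}$, $N^b(z)=N^b(w)\setminus\{v\}$ and $N^g(z)=N^g(v)\cap N^g(w)$. A direct count shows the contraction deletes $d^b(v)+d^g(v)+d^r(w)+d^g(w)-g^*-1$ edges, where $g^*=|N^g(v)\cap N^g(w)|$, and the estimate $g^*\geq d^g(v)+d^g(w)-n$ keeps this below $3(n-2)$ unless $v,w$ have almost full blue, respectively red, neighbourhoods. When $rd(G'^c)=3$ and the edge bound $3\binom{n-2}{2}+4$ required on $n-1$ vertices survives, the inductive hypothesis provides a proper Hamiltonian cycle of $G'^c$. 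This lifts back to $G^c$: if the two edges of the cycle at $z$ carry colours $\alpha\neq\beta$, then in every combination of $\alpha,\beta$ the three parallel $vw$-edges leave room to re-expand $z$ into the ordered pair $v,w$ or $w,v$ with a correctly coloured middle edge.

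The difficulty, exactly as in the proof of Theorem~\ref{2colrd2}, is the degenerate behaviour of the contraction. It may lower the rainbow degree below $3$, which can happen only when some vertex loses its last edge of a colour, pinning that vertex to a coloured degree as low as $2$ in $G^c$; or it may delete more than $3(n-2)$ edges, which by the estimate above forces a nearly complete monochromatic neighbourhood at $v$ or $w$. Since $|E(\overline{G^c})|\leq 3n-7$, each such case leaves $G^c$ rainbow complete apart from one or two vertices of small coloured degree, and I would then construct the cycle by hand: fix two differently coloured edges at each deficient vertex, draw from the rainbow-complete remainder a proper Hamiltonian path whose first and last edges have prescribed colours, and splice the pieces into a proper Hamiltonian cycle, branching on which vertices are deficient and in which colours --- the same bookkeeping carried out in cases (a)--(c) of Theorem~\ref{2colrd2}. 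Managing this final case analysis, and in particular absorbing the off-by-one in the edge count that the contraction leaves behind, is where essentially all the work of the theorem resides.
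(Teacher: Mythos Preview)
Your high-level plan matches the paper's: reduce to $c=3$ via Lemma~\ref{to3colorscycle}, induct on $n$, and contract a pair $v,w$ joined by a triple edge into a single vertex $z$; the reduction and the treatment of a vertex of degree $\geq 3n-4$ are the same.

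The gap is in how you choose $v$ and $w$. You take an \emph{arbitrary} triple-edge pair with $d(v),d(w)\leq 3n-5$; under that bound alone the contraction can delete one edge too many, which is exactly the off-by-one you flag as ``where essentially all the work resides.'' The paper removes that off-by-one at the start rather than at the end: it first invokes Theorem~\ref{3colgen}, so if every vertex has degree $\geq 3n-6$ the proof is already over, and otherwise there is a specific vertex $v$ with $d(v)\leq 3n-7$. It is \emph{this} $v$ that drives the case split. If $v$ has a triple-edge neighbour $w$, the extra unit of slack in $d(v)$ makes the $3n-6$ deletion count go through after a suitable permutation of the colours kept at $z$. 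If $v$ has no such neighbour then $d(v)\leq 2n-2$, and the paper switches to a \emph{three}-vertex contraction of $v$ with two neighbours of distinct colours; this deletes at most $5n-9$ edges, so Theorem~\ref{3colgen} (not the inductive hypothesis) applies directly to the contracted graph, with no rainbow-degree condition to re-establish. The rainbow-degree drop at a vertex $x\neq z$ is likewise not handled by an \emph{ad hoc} construction: it forces $d^i_{G^c}(x)=1$ for some colour, and a dedicated Lemma~\ref{lemma3colrd3} disposes of that situation, again via a three-vertex contraction feeding into Theorem~\ref{3colgen}. Only the narrow residual subcase where $z$ itself loses its green edges is built by hand, and there the constraints force $G^c-\{v,w\}$ to be rainbow complete up to two edges. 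In short, the degenerate cases you anticipate are absorbed by repeatedly falling back on Theorem~\ref{3colgen}, not by a Theorem~\ref{2colrd2}-style case enumeration; without the preliminary step that secures $d(v)\leq 3n-7$, your plan would face genuinely more casework than the paper actually performs.
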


For the extremal case, consider a
rainbow complete $c$-edge-colored multigraph on $n-1$ vertices. Add
one new vertex $x$. Then, add all possible edges in all colors between $x$ and one vertex of the complete multigraph. 
Clearly, the resulting multigraph has $c\binom{n-1}{2}+c$ edges, rainbow degree $c$ but it has no proper Hamiltonian cycle since it is 
not $2$-connected.

For the proof of Theorem~\ref{3colrd3} we need the following lemma.

\begin{lemma}\label{lemma3colrd3} 
Let $G^c$ be a $3$-edge-colored multigraph on $n \geq 4$ vertices colored with $\{r,b,g\}$ fulfilling the conditions of Theorem~\ref{3colrd3}.
If there exists a vertex $x$ such that $d^i(x)=1$ for some $i \in \{r,b,g\}$, then $G^c$ has a proper Hamiltonian cycle.
\end{lemma}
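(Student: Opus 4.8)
The plan is to delete $x$, build most of the cycle inside the dense punctured multigraph $G'=G^c-x$, and then splice $x$ back in using its blue and green edges. First I would reduce to the case $i=r$, so that $x$ has a unique red edge $xy$. Since $rd(G^c)=3$, the vertex $x$ is also incident to blue and to green edges; write $\beta=d^b(x)\ge 1$ and $\gamma=d^g(x)\ge 1$, and note that once $\beta+\gamma\ge 3$ is established (below) one may choose distinct vertices $u\in N^b(x)$ and $w\in N^g(x)$. In the cycle we construct, the two edges meeting $x$ must get different colours, the red one being usable only towards $y$; so it suffices to produce in $G'$ a proper Hamiltonian path $P$ from $u$ to $w$ whose first edge (at $u$) is not blue and whose last edge (at $w$) is not green. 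Then $x\,u\,P\,w\,x$ with $c(xu)=b$ and $c(xw)=g$ is a proper Hamiltonian cycle of $G^c$.

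Next I would record the two density facts that drive everything. Since $m\ge 3\binom{n-1}{2}+4$, the complement satisfies $|E(\overline{G^c})|\le 3\binom{n}{2}-3\binom{n-1}{2}-4=3n-7$. The edges missing at $x$ number exactly $(n-2)+(n-1-\beta)+(n-1-\gamma)=3n-4-\beta-\gamma$, since $d^r(x)=1$ forces $n-2$ missing red edges; hence the number of edges missing inside $G'=G^c-x$ is at most $(3n-7)-(3n-4-\beta-\gamma)=\beta+\gamma-3$. In particular $\beta+\gamma\ge 3$, and $m'=m-d(x)\ge 3\binom{n-1}{2}+3-\beta-\gamma\ge 3\binom{n-2}{2}+(n-1)$, the last inequality being tight exactly when $\beta+\gamma=2(n-1)$. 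Thus $G'$ already satisfies the hypothesis of Theorem~\ref{3colgen} on $n-1$ vertices and carries a proper Hamiltonian cycle (for $n=4$ I would check the lemma by inspection).

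The heart of the matter is the reinsertion of $x$, and the two extreme regimes are easy for opposite reasons. When $\beta+\gamma=2(n-1)$, $x$ is joined to every other vertex in both blue and green, so I may take any proper Hamiltonian cycle of $G'$, pick any edge $c_ic_{i+1}$ of it, and replace that edge by $c_i\,x\,c_{i+1}$, choosing the two colours at $x$ from $\{b,g\}$ to dodge the at most three local conflicts. When $\beta+\gamma$ is small, the bound $\beta+\gamma-3$ forces $G'$ to be rainbow-complete up to a mere handful of edges, giving enormous freedom to route the required endpoint- and colour-constrained path between a chosen $u\in N^b(x)$ and $w\in N^g(x)$.

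The hard part will be interpolating between these regimes, that is, converting the slogan ``$G'$ misses at most $\beta+\gamma-3$ edges, fewer than the $\beta+\gamma$ blue/green edges at $x$'' into an actual construction. Concretely, I expect to need the statement that in a rainbow-complete $3$-edge-coloured multigraph on $n-1$ vertices with at most $\beta+\gamma-3$ edges deleted, some well-chosen pair $(u,w)\in N^b(x)\times N^g(x)$ admits a proper Hamiltonian path whose end-edges avoid blue at $u$ and green at $w$ (equivalently, a proper Hamiltonian cycle of $G'$ through a suitable $u$--$w$ edge into which $x$ can be spliced). Theorem~\ref{3colorsPHP} only yields \emph{some} proper Hamiltonian path, with no control over its ends, so this step must be carried out by a direct and somewhat delicate argument that plays the number of available blue/green neighbours of $x$ against the few missing edges of $G'$; that balancing is precisely what the counting above is designed to supply, and it is where I expect the real work to lie.
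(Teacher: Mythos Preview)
Your route is genuinely different from the paper's. The paper does \emph{not} delete $x$ alone and try to splice it into a proper Hamiltonian cycle of $G^c-x$. Instead it contracts three vertices at once: the unique red neighbour $z$ of $x$, a blue neighbour $y\neq z$, and $x$ itself are merged into a single vertex $w$ with $N^r(w)=N^r_{G^c-\{x,z\}}(y)$, $N^b(w)=N^b_{G^c-\{x,y\}}(z)$, and $N^g(w)=N^g_{G^c-\{x,y\}}(z)\cap N^g_{G^c-\{x,z\}}(y)$. A naive count shows this removes at most $5n-7$ edges; the target is $5n-9$, because $3\binom{n-1}{2}+4-(5n-9)=3\binom{n-3}{2}+(n-2)$ exactly matches the hypothesis of Theorem~\ref{3colgen} on $n-2$ vertices. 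Saving those two edges is then done by a short case analysis on which parallel edges exist between $x$ and $z$, using a proper Hamiltonian path of $G^c-\{x,z\}$ supplied by Theorem~\ref{3colorsPHP}; in each case one either finds the cycle directly or exhibits the two missing edges. So the paper trades your ``splice one vertex into a cycle'' problem for ``save two edges in a three-vertex contraction,'' which it can finish concretely.

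Your counting is correct and the inequality $|E(\overline{G'})|\le\beta+\gamma-3$ is a nice observation, but the proposal has a real gap precisely where you flag it. From Theorem~\ref{3colgen} you get \emph{some} proper Hamiltonian cycle $C$ of $G'$, with no control over which vertices are consecutive on $C$ or which colours sit at those positions; turning that into a proper Hamiltonian $u$--$w$ path with the required end-colours, for some $u\in N^b(x)$ and $w\in N^g(x)$, is the whole lemma. The slogan ``$\beta+\gamma$ anchors versus $\beta+\gamma-3$ missing edges'' is suggestive but does not by itself produce the path: the obstructions to inserting $x$ at an edge of $C$ are colour constraints at the two endpoints, not missing edges of $G'$, so the two quantities are not directly comparable. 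You also voluntarily discard the red edge $xy$ when splicing, whereas the paper's contraction uses it essentially; allowing $y$ as one endpoint (with colour $r$) would give you more anchors. As written, the plan is plausible but incomplete, and completing it looks no lighter than the paper's case analysis.
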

\begin{proof}
We can suppose without loss of generality that $d^r(x)=1$. Let $z$ be the only neighbor of $x$ such that $r \in c(xz)$ and let $y\neq z$ be 
another neighbor of $x$ such that $b \in c(xy)$. Clearly this vertex $y$ exists in $G^c$, otherwise $G^c$ would have at most
$3\binom{n}{2} - 3(n-2)$ edges (as $x$ could only be adjacent to $z$ with edges on colors $r,b,g$) and this is less than $3\binom{n-1}{2}+3+1$,
that is, a contradiction on the number of edges.
Now, contract the vertices $x,y$ and $z$ to a new vertex $w$ such that 
$N^r(w) = N_{G^c-\{x,z\}}^r(y), N^b(w) = N_{G^c-\{x,y\}}^b(z)$ and $N^g(w) = N_{G^c-\{x,y\}}^g(z) \cap N_{G^c-\{x,z\}}^g(y)$.
Within this contraction we removed at most $5n-7$ edges from $G^c$.
Let $G'^c$ be the obtained multigraph on $n-2$ vertices. 
We will show that either $G'^c$ has at least $3\binom{n-1}{2}+3+1 -(5n-9)$ edges (thus we can apply Theorem~\ref{3colgen}) or 
$G^c$ has a proper Hamiltonian cycle. In both cases the result holds.

Note that we can assume that there is at least one edge between $x$ and $z$ with color $b$ or $g$, otherwise if both edges are not present, 
$G'^c$ would have at least $3\binom{n-1}{2}+3+1-(5n-9)$ edges, as desired. 
Assume in the sequel that we have at least two parallel edges between $x$ and $z$, one on color $r$ and the other on color $b$ or $g$, say $b$.
Now consider the multigraph $G^c-\{x,z\}$. We can check by Theorem~\ref{3colorsPHP} that $G^c-\{x,z\}$ has a proper Hamiltonian path 
$P=v_1v_2\ldots v_{n-2}$. We can also assume that $y=v_1$ (or $y=v_{n-2}$); otherwise, if there is no edge between $x$ and any of $v_1,v_{n-2}$ 
on colors $b$ or $g$, we get to delete two edges fewer, leaving at most $5n-9$ edges to be deleted in the contraction. We have the following cases now.

Suppose first that there are exactly two edges on colors $r$ and $b$ between $x$ and $z$. 
So, we need to find one edge less in order to have at most $5n-9$ edges to delete. Therefore, $x$ should be adjacent to both $y=v_1$ and $v_{n-2}$ 
with parallel edges on colors $b$ and $g$, otherwise, we obtain the last edge as required. Now, if $z$ is adjacent to $v_1$ with edges 
on both colors $b$ and $g$, then $xzv_1\ldots v_{n-2}x$ is a proper Hamiltonian cycle in $G^c$. 
Otherwise, we obtain one edge less to delete and we are done.

Suppose last that there are three edges on colors $r,b$ and $g$ between $x$ and $z$. 
Assume $x$ is adjacent to $v_1$ with only one edge on color $b$. Then there are two parallel edges between $x$ and $v_{n-2}$ 
on colors $b$ and $g$, otherwise we obtain two more missing edges. Now, if $z$ is adjacent to $v_1$ with two parallel edges 
on both colors $b$ and $g$, then $xzv_1\ldots v_{n-2}x$ is a proper Hamiltonian cycle in $G^c$. Otherwise, we obtain no more than 
$5n-9$ deleted edges. Assume finally that there exist two parallel edges between $x$ and $v_1$ on colors $b$ and $g$. 
Now, if the edge $xv_{n-2}$ exists on a color different from $v_{n-3}v_{n-2}$, then
either $xzv_1\ldots v_{n-2}x$ is a proper Hamiltonian cycle in $G^c$ or there are two missing edges between $z$ and $v_1$, as desired.
Otherwise, there exists only one edge between $x$ and $v_{n-2}$ where $xv_{n-2}$, $v_{n-3}v_{n-2}$ have the same color.
So, if we have an edge $zv_{n-2}$ on a color different from $v_{n-3}v_{n-2}$, then 
$zxv_1\ldots v_{n-2}z$ is a proper Hamiltonian cycle in $G^c$. Otherwise, there are two missing edges between $z$ and $v_{n-2}$.
Therefore if we take $x,z,v_{n-2}$ for the contraction instead of $x,z,y$, then we remove $5n-9$ edges.
This completes the argument of this case and the proof.
\end{proof}

\noindent \textbf{Proof of Theorem~\ref{3colrd3}.}
By Lemma~\ref{to3colorscycle} it is enough to prove the theorem for $c=3$. Assume that the edges of $G^c$ are colored with $\{r,b,g\}$. 
The proof is by induction on $n$. For $n=4,5$ the theorem is easily checked. 
Suppose then that $n \geq 6$. 

Suppose that for every vertex $v$ in $G^c$ we have $d(v) \geq 3n-6$.
Then $m \geq \frac{1}{2}\sum_v (3n-6) = \frac{3n^2-6n}{2} \geq 3\binom{n-1}{2}+n = \frac{3n^2-7n+6}{2}$ for $n\geq 6$. 
Therefore by Theorem~\ref{3colgen} we obtain a proper Hamiltonian cycle in $G^c$. 

We can assume next that there exists a vertex $v$ in $G^c$ such that $d(v) \leq 3n-7$.
Suppose now that there exists a vertex $w$ in $G^c$ such that $d(w) \geq 3n-4$. Clearly $v \neq w$.
Now we remove $w$ from $G^c$ obtaining a multigraph with at least $3\binom{n-1}{2}+4 - 3(n-1)=3\binom{n-2}{2}+1$ edges. 
By Theorem~\ref{3colorsPHP}, $G^c-\{w\}$ has a proper Hamiltonian path. Since $d(w) \geq 3n-4$ (i.e. at most one edge is missing at $w$), 
we can easily join $w$ to the endpoints of this proper path in order to obtain a proper Hamiltonian cycle in $G^c$. 
In what follows assume that for every vertex $w\neq v$ in $G^c$ we have $d(w) \leq 3n-5$.
Now we distinguish between the following two cases depending on the neighbors of $v$.

\begin{itemize}
 \item \emph{$v$ has a neighbor $w$ such that there are three parallel edges between $v$ and $w$ on colors $r,b$ and $g$ respectively.}
Consider a new multigraph $G'^c$ obtained from $G^c$ by contracting $v$ and $w$ to a single 
vertex $z$ such that $N^r(z) = N^r(v)-\{w\}, N^b(z) = N^b(w)-\{v\}$ and $N^g(z) = N^g(v) \cap N^g(w)$. In order to apply induction on $G'^c$, 
we need to show that it has at least $3\binom{n-2}{2}+4$ edges and $rd(G'^c)=3$.  Observe that $3\binom{n-1}{2}+4-(3\binom{n-2}{2}+4)=3n-6$, that is, 
the maximum number of edges that we are allowed to delete in the contraction process. Indeed, as $d(v) \leq 3n-7$ and $d(w) \leq 3n-5$, 
we can choose the colors to delete in order to remove $3n-6$ edges as desired. Now, if $rd(G'^c)=3$, then by induction, $G'^c$ has a proper 
Hamiltonian cycle and this cycle can be easily transformed to a proper one in $G^c$. 
Otherwise suppose that $rd(G'^c)\leq 2$, that is, there exists a vertex $x$ in $G'^c$ such that $rd(x) \leq 2$. 
If $x\neq z$, then it is easy to see that there exists a color $i \in \{r,b,g\}$ such that $d_{G^c}^i(x) = 1$.
Thus, by Lemma~\ref{lemma3colrd3} the result holds.

Assume now that $x=z$. If $z$ has no edge incident on color $r$ (or by symmetry on color $b$), then we have that $d_{G^c}^r(v) = 1$ 
and again by Lemma~\ref{lemma3colrd3} we obtain a proper Hamiltonian cycle in $G^c$.
If $z$ has no edge incident on color $g$, then we conclude that $d_{G^c}^g(v) + d_{G^c}^g(w) \leq n$.
As colors $r,b$ and $g$ are symmetric, from all above arguments we conclude that 
$d_{G^c}^b(v) + d_{G^c}^b(w) \leq n$ and $d_{G^c}^r(v) + d_{G^c}^r(w) \leq n$.
Consider now the multigraph $G'^c= G^c-\{v,w\}$. $G'^c$ has $n-2$ vertices and at least $3\binom{n-1}{2}+4 - 3n = 3\binom{n-2}{2}-2$ edges, i.e.,
$G'^c$ is almost rainbow complete (as only two edges are missing). Take a neighbor, say $v'$, of $v$ in $G'^c$ and a neighbor, say $w'\neq v'$, of $w$
in $G'^c$. Now we take any proper Hamiltonian path $P=v_1\ldots v_{n-2}$ in $G'^c$ such that $v'=v_1$ and $w'=v_{n-2}$, and the pair of edges $vv'$, $v'v_2$ and 
$ww'$, $w'v_{n-3}$ have different colors respectively. Observe that this path always exists since $G'^c$ is almost rainbow complete. 
Finally $vv'Pw'wv$ is a proper Hamiltonian cycle in $G^c$ as the edge $vw$ has the three colors to choose in order to take one 
different from the colors on the edges $vv'$ and $ww'$.

 \item \emph{$v$ has no neighbor $w$ such that there are three parallel edges between $v$ and $w$ on colors $r,b$ and $g$.}
This implies that $d(v) \leq 2n-2$. Consider two distinct neighbors of $v$, say $x$ and $y$, such that the edges $vx$, $vy$ have different colors.
As in the proof of Lemma~\ref{lemma3colrd3}, contract $v,x$ and $y$ to a new vertex $z$ and let $G'^c$ be the obtained multigraph. 
If within this contraction we removed $5n-9$ edges from $G^c$, then $G'^c$ has a proper Hamiltonian cycle by Theorem~\ref{3colgen} and 
therefore one in $G^c$. Observe now that, as $d(v) \leq 2n-2$, if we do the contraction we remove at most $5n-8$ edges. Therefore, if 
$d(v) \leq 2n-3$ we are finished. We can suppose then that $d(v)=2n-2$, that is, there exist exactly two parallel edges between $v$ and every other 
vertex. Consider some neighbor $x$ of $v$. Now as $d(x) \leq 3n-5$ and there is one edge missing between $v$ and $x$, there 
exists a neighbor $y$ of $x$ such that there are at most two parallel edges between them. Since there are 
exactly two parallel edges between $v$ and $y$, if we contract these three vertices $v,x$ and $y$ choosing appropriately which colors to delete 
we remove at most $5n-9$ edges as desired.
\qed
\end{itemize}

To finish this section, we state a conjecture for the existence of proper Hamiltonian cycles depending not only on rainbow degree and the number of 
edges but also on the connectivity.

\begin{conjecture}\label{3colrd3conn} 
Let $G^c$ be a $2$-connected $c$-edge-colored multigraph on $n$ vertices, $n \geq 10$ and $c\geq 3$.
If $rd(G^c)=c$ and $m\geq c\binom{n-2}{2}+4c+1$, then $G^c$ has a proper Hamiltonian cycle.
\end{conjecture}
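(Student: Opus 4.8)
The plan is to imitate the inductive, contraction-based arguments of Theorems~\ref{3colgen} and~\ref{3colrd3}, using the hypothesis of $2$-connectivity to absorb the roughly $c\,n$ extra missing edges that the weaker edge bound now tolerates. First I would reduce to $c=3$ by Lemma~\ref{to3colorscycle}: with $\ell=\binom{n-2}{2}+4$ the hypothesis reads exactly $m\geq c\ell+1$, the lemma lowers $c$ while preserving $rd(G^c)=c$ and the bound $m'\geq (c-1)\ell+1$, and — crucially for us — recolouring edges and deleting only \emph{duplicate} parallel edges never removes the last edge between a pair of vertices, so the underlying simple graph, and hence $2$-connectivity, is unchanged. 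So I may assume $G^c$ is coloured with $\{r,b,g\}$, is $2$-connected, has $rd(G^c)=3$ and $m\geq 3\binom{n-2}{2}+13$. If moreover $m\geq 3\binom{n-1}{2}+4$ we are already done by Theorem~\ref{3colrd3}; hence I assume $3\binom{n-2}{2}+13\leq m<3\binom{n-1}{2}+4$, which is equivalent to $\overline{G^c}$ having between $3n-6$ and $6n-22$ edges. The argument then proceeds by induction on $n$, with $n=10,11$ (both parities) dispatched as base cases.

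For the inductive step I would first run a degree dichotomy. If every vertex satisfies $d(v)\geq 3n-6$ then $m\geq \tfrac12 n(3n-6)\geq 3\binom{n-1}{2}+n$ (for $n\geq 6$) and Theorem~\ref{3colgen} applies directly. Otherwise fix a vertex $v$ with $d(v)\leq 3n-7$. A counting argument shows there is always a pair of vertices joined in all three colours (if every pair carried at most two edges, then $m\leq 2\binom{n}{2}<3\binom{n-2}{2}+13$), and in fact the \emph{triple graph} $T$ — the pairs joined in all of $r,b,g$ — has at least $\binom{n}{2}-(6n-22)$ edges and is thus very dense. I then split exactly as in the two bullets of Theorem~\ref{3colrd3}: if the low-degree $v$ has a neighbour $w$ joined to it by all three colours, contract $\{v,w\}$ into a single vertex $z$ (keeping one colour from each endpoint and the common colour on shared neighbours), passing from $n$ to $n-1$; if $v$ has no such neighbour, then $d(v)\leq 2n-2$ and I instead pick neighbours $x,y$ with $c(vx)\neq c(vy)$ and contract $\{v,x,y\}$ as in Lemma~\ref{lemma3colrd3}, passing from $n$ to $n-2$. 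In either case the colours kept are chosen greedily, and since $v$ has small degree the contraction stays within the (now comfortably larger, because of the extra $4c$) edge budget needed to apply the induction hypothesis to the smaller multigraph; a drop of rainbow degree at $z$ manifests as a vertex of colour-degree $1$ in $G^c$, which is handled by a version of Lemma~\ref{lemma3colrd3} re-derived for the present bound. Verifying this edge bookkeeping is tedious but of the same flavour as the existing proofs.

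The genuinely new ingredient is that the contracted multigraph must remain $2$-connected. Here the key observation is clean: contracting a \emph{connected} set $S$ of an already connected graph keeps it connected, and a cut vertex of the result can only be the new vertex $z$; thus $2$-connectivity is lost precisely when the contracted set $S$ (of size $2$ or $3$) is itself a vertex cut of $G^c$. But a cut $S$ separating sides $A,B$ forces all $3|A|\,|B|$ colour-edges between $A$ and $B$ to be absent, and since $|A|\,|B|\geq n-|S|-1$, this costs at least about $3(n-4)$ missing edges. Within the budget of $6n-22$ missing edges the only surviving possibility is a separated side consisting of one or two vertices whose entire neighbourhood lies in $S$ — i.e. vertices of degree at most $2c$. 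Consequently, whenever no contraction preserving $2$-connectivity is available, $G^c$ must contain such very-low-degree vertices sitting on a common small cut, which is exactly the extremal configuration described before the statement.

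The remaining and hardest case is therefore the direct treatment of this obstruction. With a single low-degree vertex hanging on a pair $\{a,b\}$, $2$-connectivity supplies two internally disjoint $a$–$b$ paths (Menger), and together with the near-rainbow-completeness of $G^c-\{a,b\}$, Theorem~\ref{3colorsPHP} and a splicing argument in the spirit of Theorem~\ref{3colgen} one can thread the pendant vertex into a proper Hamiltonian cycle. The delicate situation is two such vertices $x,y$ sharing the same pair $\{a,b\}$: the extremal multigraph shows they block \emph{every} Hamiltonian cycle when $m=3\binom{n-2}{2}+12$, so the whole force of the bound must be spent here, showing that the single extra edge guaranteed by $m\geq 3\binom{n-2}{2}+13$ — whose colour and endpoints we do not control — always yields an alternative attachment for $x$ or $y$ allowing the cycle to close. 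Making this final routing argument uniform, and then lifting it back through the reduction to general $c$, is the main obstacle; it is precisely the step I do not expect to go through by a short calculation, and the reason the statement is posed only as a conjecture.
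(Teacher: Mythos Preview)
The statement you are attempting is labelled a \emph{conjecture} in the paper and is not proved there; the paper supplies only the extremal construction showing that the bound, if true, is tight. There is therefore no proof in the paper to compare your proposal against.

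As for the proposal itself, you have written a plausible strategy in the style of Theorems~\ref{3colgen} and~\ref{3colrd3}, and several of the auxiliary observations are correct (the reduction to $c=3$ via Lemma~\ref{to3colorscycle} does preserve $2$-connectivity since only duplicate parallel edges are deleted; the complement edge count $3n-6\leq |E(\overline{G^c})|\leq 6n-22$ is right; the fact that contraction of a connected set can create a cut vertex only at the new vertex is right). But you explicitly acknowledge that the decisive case --- two vertices of degree at most $2c$ hanging on the same $2$-cut, where the single extra edge above the extremal count must be exploited --- is not handled, and you do not expect it to go through by a short calculation. That is exactly the obstruction the extremal example exhibits, and it is why the authors state this only as a conjecture. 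In short, your write-up is a proof \emph{plan} with a clearly flagged gap, not a proof; the paper has nothing further to offer on this point.
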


If true, Conjecture~\ref{3colrd3conn} is tight. For this, consider a rainbow complete $c$-edge-colored multigraph on $n-2$ vertices. 
Add two new vertices $x_1$ and $x_2$. Consider two vertices of the complete multigraph $y_1$ and $y_2$. Finally, add all possible edges in all colors 
between $\{x_1,x_2\}$ and $\{y_1,y_2\}$. The resulting multigraph has $c\binom{n-2}{2}+4c$ edges, rainbow degree $c$ and it is $2$-connected. 
However, this multigraph has no proper Hamiltonian cycle since the vertices $x_1, x_2$ cannot both be together in such a proper Hamiltonian cycle.

\bibliography{biblio}

\end{document}